\newtheorem{thm}{Theorem}[section]
\newtheorem{theorem}[thm]{Theorem}
\newtheorem{lemma}[thm]{Lemma}
\newtheorem{corollary}[thm]{Corollary}
\newtheorem{remark}[thm]{Remark}
\newtheorem{assumption}[thm]{Assumption}
\newtheorem{algorithm}[thm]{Algorithm}
\begin{document}
\title{Reduction from the Partition Problem: Dynamic Lot Sizing Problem with Polynomial Complexity}

\author{{\bf{Chee-Khian Sim}}\footnote{Email address:
chee-khian.sim@port.ac.uk}
 \\ School of Mathematics and Physics\\
University of Portsmouth \\ Lion Gate Building, Lion Terrace \\ Portsmouth PO1 3HF \\ United Kingdom
}

\date{Last updated:  23 December 2025}

\maketitle
\begin{abstract}
In this note, we polynomially reduce an instance of the partition problem to a dynamic lot sizing problem, and show that solving the latter problem solves the former problem.   By solving the dynamic program formulation of the dynamic lot sizing problem, we show that the instance of the partition problem can be solved with pseudo-polynomial time complexity.  Numerical results on solving instances of the partition problem are also provided using an implementation of the algorithm that solves the dynamic program.  We conclude by discussing polynomial time solvability of the partition problem through further observation on the dynamic program formulation of the dynamic lot sizing problem.
 
\vspace{15pt}

\noindent {\bf{Keywords}}. Partition problem; dynamic lot sizing model; dynamic program; pseudo-polynomial time complexity; polynomial time complexity. 

\end{abstract}

\section{Introduction}\label{sec:Intro}

An NP-complete problem is in the class of NP, and at the same time, it is NP-hard.    In this note, we consider solving a well-known NP-complete problem -  the partition problem \cite{garey,Karp,Lenstra,Pinedo}.  We relate the problem to a dynamic lot sizing problem, and by solving the dynamic program formulation of the latter problem, we show that we can solve any instance of the partition problem with pseudo-polynomial time complexity.  We conclude this note by discussing the possibility of polynomial time complexity of the partition problem through our approach.   Recall from \cite{garey} that an algorithm for a problem is a pseudo-polynomial time algorithm for the problem if for a problem instance $I$ of the problem, its time complexity function is bounded above by a polynomial function of two variables Length$[I]$ and Max$[I]$, where Length$[I]$ is an integer that corresponds to the number of symbols used to describe $I$ under some reasonable encoding scheme for the problem, while Max$[I]$ is an integer that corresponds to the magnitude of the largest number in $I$.   On the other hand, an algorithm for a problem is a polynomial time algorithm for the problem if for a problem instance $I$ of the problem, its time complexity function is bounded above by a polynomial function of Length$[I]$.   

\subsection{Notations}\label{subsec:notations}

Let $f(x)$ and $g(x)$ be two nonnegative real-valued functions, where $x \in \mathbb{Z}^{++}$.   We write $g(x) = O(f(x))$ to mean that $g(x) \leq K f(x)$ for some positive constant $K$ and all $x > 0$.

\vspace{10pt}

\noindent Furthermore, $\mathbb{I}(x)$, where $x \in \mathbb{Z}$, is defined to be $1$ for $ x > 0$, and $0$ otherwise; and $x^+ = \max \{x, 0\}$, where $x \in \Re$.   Also, we have $\min \{x,y\} = x - (x-y)^+$ for $x,y \in \Re$.

%

\section{The Partition Problem and its Reduction to a Dynamic Lot Sizing Problem}\label{sec:Reduction}

Let $S = \{1, \ldots, n \}$ and $a_i \in \mathbb{Z}^{++}$ for $i \in S$, with $\sum_{i \in S} a_i = 2C$.  The partition problem is to find a subset $A$ of $S$ such that 
\begin{eqnarray*}
\sum_{i \in A} a_i = \sum_{i \in S \backslash A} a_i = C.
\end{eqnarray*}

\vspace{10pt}

\noindent It is known that the partition problem is NP-complete \cite{Cook,garey,Karp}.  We call an instance of the partition problem {\bf{PPi}}.

\vspace{10pt}

\noindent In this section, we reduce {\bf{PPi}} to a dynamic lot sizing problem in polynomial time and show that solving the dynamic lot sizing problem solves {\bf{PPi}} in Theorem \ref{thm:solution}.  As a consequence, if the dynamic lot sizing problem can be solved with polynomial complexity in some sense, {\bf{PPi}} can also be solved with polynomial complexity in the same sense.

\vspace{10pt}

\noindent Dynamic lot sizing problem is introduced in \cite{Wagner2}, and has since been studied intensively by researchers.  We consider a variant of this basic problem which is related to remanufacturing. 

\vspace{10pt}

\noindent In the following, we list down the parameters of the dynamic lot sizing model that we are considering in this note:

\vspace{10pt}

\noindent {\bf{Parameters}}:
\begin{itemize}
\item $N = $ number of periods in the time horizon, where $N \geq 1$;
\item $D_i = $ demand for serviceable products in the $i^{th}$ period, where  $i = 1, \ldots, N$.  We assume that $D_i \in \mathbb{Z}^{++}, i = 1, \ldots, N$;
\item $R_i = $ returned products as cores at the beginning of the $i^{th}$ period, where $i = 1, \ldots, N$.  We assume that $R_i \in \mathbb{Z}^+, i = 1, \ldots, N$, and set $R_{N+1} = 0$;
\item $K_{r,i} = $ setup cost when there is remanufacturing at the beginning of the $i^{th}$ period, where $i = 1, \ldots, N$.  We let  $K_{r,i} \geq 0$ for all $i = 1, \ldots, N$, and set $K_{r,N+1} = 0$;
\item $\Delta K_{m,i} = $ setup cost when there is manufacturing at the beginning of the $i^{th}$ period, where  $i = 1, \ldots, N$.  We let $\Delta K_{m,i} \geq 0$ for all $i = 1, \ldots, N$, and set $\Delta K_{m,N+1} = 0$;
\item $h_{s,i} = $ unit holding cost of serviceable product over the $i^{th}$ period whether from manufacturing or remanufacturing, where $i = 1, \ldots, N$.  We let $h_{s,i} \geq 0$ for all $i = 1, \ldots, N$;
\item $h_{c,i} = $ unit holding cost of core over the $i^{th}$ period, where $i = 1, \ldots, N$.  We let $h_{c,i} \geq 0$ for all $i = 1, \ldots, N$;
\item $c_{r,i} = $ unit remanufacturing cost in the $i^{th}$ period, where $i = 1, \ldots, N$.  We let $c_{r,i} \geq 0$ for all $i = 2, \ldots, N$, and set $c_{r,N+1} = 0$;
\item $c_{m,i} = $ unit manufacturing cost in the $i^{th}$ period, where $i = 1, \ldots, N$.  We let $c_{m,i} \geq 0$ for all $i = 1, \ldots, N$, and set $c_{m,N+1} = 0$. 
\end{itemize}
 
\noindent We further impose assumptions on the above parameters as follows:
\begin{assumption}\label{ass:parameters}
\begin{enumerate}[(a)]
\item $h_{s,i} + c_{m,i} > \Delta K_{m,i+1} + c_{m,i+1}$ for $1 \leq i \leq N$;
\item $h_{s,i} + c_{r,i}  > K_{r,i+1} + h_{c,i} + c_{r,i+1}$ for $1 \leq i \leq N$.
\end{enumerate}
\end{assumption}

\noindent These assumptions are crucial to prove Lemma \ref{lem:optimalproperties}, which in turn is needed to formulate the dynamic program for the dynamic lot sizing problem we are considering in this note, and also to solve it efficiently.


\vspace{10pt}

\noindent Demand must be satisfied in each period in our model. Our objective for the model is to minimize its total cost, which comprises of setup costs for manufacturing and remanufacturing, holding costs for serviceable products and cores, manufacturing and remanufacturing costs.  We have the following sequence of events in our model - at the beginning of a period, (i) returned products arrive as cores; (ii) number of units of serviceable products to produce through remanufacturing and manufacturing is determined; (iii) demand in the period is satisfied; (iv) any leftover cores and serviceable products are held to the next period.

%

\vspace{10pt}

\noindent The dynamic lot sizing problem ({\bf{DLSP}}) we are considering is given by:
\begin{eqnarray*}
\min \sum_{i=1}^N (K_{r,i} \mathbb{I}(x_i) + \Delta K_{m,i} \mathbb{I}(y_i) + c_{r,i} x_i + c_{m,i} y_i + h_{c,i}[ J_{i} - x_i] + h_{s,i} I_{i+1})
\end{eqnarray*}
subject to
\begin{eqnarray*}
& & J_{i+1} = J_i + R_{i+1} - x_i,\ i = 1, \ldots, N, \\
& & I_{i+1} = I_i + x_i + y_i - D_i, \ i = 1, \ldots, N, \\
& & x_i \leq J_i, \ i = 1, \ldots, N, \\
& & J_i, I_i \geq 0, \ i = 2, \ldots, N+1, \\
& & x_i, y_i \in \mathbb{Z}^+, \ i = 1, \ldots, N, \\
& & J_1 = R_1,  I_1 = 0.
\end{eqnarray*}
The decision variables in the above minimization problem are:
\begin{itemize}
\item $x_i = $ number of units of cores remanufactured in the $i^{th}$ period;
\item $y_i = $ number of units of serviceable products obtained by manufacturing in the $i^{th}$ period,
\end{itemize}
while 
\begin{itemize}
\item $J_i = $ number of units of cores at the beginning of the $i^{th}$ period;
\item $I_i = $ number of units of available serviceable products at the beginning of the $i^{th}$ period.
\end{itemize}
The objective function in the above minimization problem is the total cost of the model.  The first constraint tells us the number of units of cores available at the beginning of the $(i+1)^{th}$ period, $i = 1, \ldots, N$, after events occurred in the $i^{th}$ period.  The second constraint tells us the number of units of serviceable products available at the beginning of the $(i+1)^{th}$ period, $i = 1, \ldots, N$, after events occurred in the $i^{th}$ period.  The third constraint tells us that the number of cores remanufactured in the $i^{th}$ period cannot exceed the cores available in the period.  The fourth constraint tells us that the number of units of cores and serviceable products at the beginning of the $i^{th}$ period are never negative.  The next constraint is the sign constraint on the decision variables in the problem, while the last constraint sets specific values on $J_1, I_1$.    Note that the parameters in {\bf{DLSP}} satisfy Assumption \ref{ass:parameters}.

\vspace{10pt}

\noindent Let us call the optimal value of the minimization problem $C^\ast$, and its optimal solution $x_i^\ast, y_i^\ast$, $i = 1, \ldots, N$, with $J_i^\ast = J_{i-1}^\ast + R_{i} - x_{i-1}^\ast, I_i^\ast = I_{i-1}^\ast + x_{i-1}^\ast + y_{i-1}^\ast - D_{i-1}$, $i = 2, \ldots, N+1$, $J_1^\ast = R_1, I_1^\ast = 0$.

\vspace{10pt}

\noindent We reduce {\bf{PPi}} in polynomial time to the above dynamic lot sizing problem by setting appropriate values for parameters of the model as follows:
\begin{itemize}
\item $N =n$;
\item $D_i = a_{i}$, $i = 1, \ldots, N (= n)$;
\item $R_1 = C$, $R_i = 0$, $i = 2, \ldots, N$;
\item $K_{r,i} = \Delta K_{m,i} = 1$, $i =1, \ldots, N$;
\item $h_{s,i} = 3$, $i = 1, \ldots, N$;  
\item $h_{c,i} = 0$, $i = 1, \ldots, N$;
\item $c_{r,i} = 0$, $i = 1, \ldots, N$;
\item $c_{m,i} = 1$, $i = 1, \ldots, N$.
\end{itemize}
It is easy to check that parameters of the model with the above values satisfy Assumption \ref{ass:parameters}.  We call the dynamic lot sizing problem with these values for its parameters {\bf{DLSPp}}, and this problem is a special case of {\bf{DLSP}}.  We have the following theorem:

\begin{theorem}\label{thm:solution}
{\bf{PPi}} can be solved by solving {\bf{DLSPp}}.
\end{theorem}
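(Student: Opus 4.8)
The plan is to prove that {\bf{DLSPp}} has optimal value exactly $n+C$ precisely when {\bf{PPi}} is solvable, and that in that case an optimal schedule displays a valid partition. I would start by substituting the reduction's parameter values into {\bf{DLSP}}: since $c_{r,i}=h_{c,i}=0$, $c_{m,i}=K_{r,i}=\Delta K_{m,i}=1$, $h_{s,i}=3$, $R_1=C$ and $R_i=0$ for $i\ge 2$, problem {\bf{DLSPp}} reduces to minimizing $\sum_{i=1}^N(\mathbb I(x_i)+\mathbb I(y_i)+y_i+3I_{i+1})$ over $J_{i+1}=J_i-x_i$, $I_{i+1}=I_i+x_i+y_i-D_i$, $0\le x_i\le J_i$, $I_i\ge 0$, $x_i,y_i\in\mathbb Z^+$, $J_1=C$, $I_1=0$, $D_i=a_i$. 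Two observations are the backbone: $J_{N+1}=C-\sum_i x_i\ge 0$ gives $\sum_{i=1}^N x_i\le C$, and $I_{N+1}=\sum_i(x_i+y_i)-2C\ge 0$ together with this gives $\sum_{i=1}^N y_i\ge C$.

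For one direction, given a partition $A$ of $S$ with $\sum_{i\in A}a_i=C$ I would take $x_i=a_i,\,y_i=0$ for $i\in A$ and $x_i=0,\,y_i=a_i$ for $i\notin A$; then $x_i+y_i=D_i$ for all $i$ forces $I_i\equiv 0$, the core constraints hold because $J_i=\sum_{j\ge i,\,j\in A}a_j\ge a_i$ when $i\in A$, and the objective evaluates to $|A|+\sum_{i\notin A}(1+a_i)=n+C$. Hence $C^\ast\le n+C$.

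The substantive step is the matching lower bound $C^\ast\ge n+C$ together with rigidity. I would split the cost as $\bigl(\sum_i\mathbb I(x_i)+\sum_i\mathbb I(y_i)+3\sum_i I_{i+1}\bigr)+\sum_i y_i$; the last term is $\ge C$. For the bracket, let $k$ be the number of periods in which production occurs, so $\sum_i\mathbb I(x_i)+\sum_i\mathbb I(y_i)\ge k$; period $1$ must be productive (otherwise $I_2=-D_1<0$), so the remaining $n-k$ non-productive periods lie in $\{2,\dots,N\}$ and each satisfies $I_i\ge D_i\ge 1$, whence $\sum_i I_{i+1}=\sum_{i=2}^{N+1}I_i\ge n-k$ and $3\sum_i I_{i+1}\ge 3(n-k)$. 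Therefore the bracket is $\ge k+3(n-k)=3n-2k\ge n$, giving $C^\ast\ge n+C$. When equality holds it forces $k=n$, $\sum_i\mathbb I(x_i)+\sum_i\mathbb I(y_i)=n$, $I_{i+1}=0$ for every $i$, and $\sum_i y_i=C$; hence in each period exactly one of $x_i,y_i$ is positive, $x_i+y_i=a_i$, and $\sum_i x_i=2C-\sum_i y_i=C$, so $A:=\{i:x_i^\ast>0\}$ satisfies $\sum_{i\in A}a_i=\sum_i x_i^\ast=C$. Combining the two directions: {\bf{PPi}} is solvable iff $C^\ast=n+C$, and then $\{i:x_i^\ast>0\}$ is a partition, so solving {\bf{DLSPp}} solves {\bf{PPi}}. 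The one point that needs care — and is exactly what Assumption~\ref{ass:parameters} is securing for {\bf{DLSPp}} — is that the unit holding cost $3$ strictly dominates a single setup cost, so that it is never profitable to leave a period non-productive; once this trade-off is quantified as $3(n-k)\ge n-k$ above, the rest is bookkeeping.
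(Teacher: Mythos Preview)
Your proof is correct and follows the paper's two-claim structure: construct a feasible schedule of cost $n+C$ from a given partition, and conversely recover a partition from an optimal schedule of cost $n+C$. The execution of the converse differs. The paper first argues structurally that an optimal solution carries no serviceable inventory (comparing $h_{s,i}=3$ against a next-period setup/production cost of at most $2$) and then asserts, somewhat informally, that all $C$ cores are remanufactured, whence manufacturing cost $=C$ and total setup $\ge N$. You instead establish the unconditional lower bound $C^\ast\ge n+C$ directly via the accounting inequality $\sum_i\mathbb I(x_i)+\sum_i\mathbb I(y_i)+3\sum_iI_{i+1}\ge k+3(n-k)\ge n$ combined with $\sum_iy_i\ge C$, and then read off the rigidity ($k=n$, $I_{i+1}\equiv0$, exactly one of $x_i,y_i$ positive, $\sum y_i=C$) from the equality case. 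Your route sidesteps the paper's ``it is easy to see'' step about full core usage and yields the sharper statement that $C^\ast=n+C$ if and only if a partition exists; the paper's route, on the other hand, makes the link to the zero-inventory Lemma~\ref{lem:optimalproperties} more explicit.
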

\begin{proof}
{\it{Claim 1:}} Suppose there exists a subset $A$ of $S = \{1, \ldots, n\}$ such that 
\begin{eqnarray*}
\sum_{i \in A} a_i = \sum_{i \in S \backslash A} a_i = C,
\end{eqnarray*}
then the optimal value to {\bf{DLSPp}} is at most $N+C$.

\vspace{5pt}

\noindent  It is easy to see that by remanufacturing $D_i$ units of cores in the $i^{th}$ period, when $i \in A$, and manufacturing $D_i$ units from raw materials in the $i^{th}$ period, when $i \not\in A$,  total cost is $N + C$, and it is feasible to {\bf{DLSPp}}.  Hence, the optimal value to {\bf{DLSPp}} is at most $N + C$.

\vspace{5pt}

\noindent {\it{Claim 2:}}  Suppose the optimal value to {\bf{DLSPp}} is at most $N+C$.  Let $A$ contains elements $i \in S = \{ 1, \ldots, N\}$ such that we remanufacture in the $i^{th}$ period in {\bf{DLSPp}}.  Then we have
\begin{eqnarray*}
\sum_{i \in A} a_i  = \sum_{i \in S \backslash A} a_i = C.
\end{eqnarray*}

\noindent First note that under optimality, whenever we produce, we only produce enough to satisfy demand for the period, and do not hold serviceable products to the next period.  To see this, suppose we hold a serviceable product to the next period, then a cost of $h_{s,i} = 3$ is incurred.  If we do not produce the serviceable product in the current period, but in the next period, we do not incur the holding cost of $h_{s,i} = 3$ and may even save on its manufacturing cost if this product is obtained by manufacturing, but we incur a possible setup cost of $1$ due to remanufacturing or manufacturing, and possible unit manufacturing cost $c_{m,i+1} = 1$ in the next period.  In the new setup, total cost is reduced by at least $1$, but this contradicts optimality.  Hence, under optimality, whenever we produce, we only produce enough to satisfy demand for the period, and do not hold serviceable products to the next period.   It is easy to see that all $R_1 = C$ units of cores are remanufactured to satisfy demand since there is no cost for remanufacturing.  Note that these cores need not be all remanufactured in the $1^{st}$ period and they can be held to later periods for remanufacturing without incurring holding cost since $h_{c,i} = 0$.  Now, total demand is $\sum_{i=1}^N D_i = \sum_{i=1}^n a_i = 2C$, and since half of these demands is satisfied through remanfacturing and that all demand has to be satisfied, the other half of these demands has to be satisfied through manufacturing, incurring a total manufacturing cost of $C$, since $c_{m,i} = 1$.  In each period, we always have manufacturing and/or remanufacturing to satisfy demand in the period, as we do not have serviceable products held from earlier periods to satisfy demand in the period.  Total setup cost is then at least $N$.   Hence, total cost is at least $N + C$.  However, the optimal value to {\bf{DLSPp}} is at most $N+C$. Therefore, under optimality, we must have total cost is exactly $N + C$, leading to total setup cost to be exactly $N$, and we either remanufacture or manufacture in a period.  Claim 2 then follows.  
\end{proof}

\vspace{10pt}

\noindent Note that {\bf{DLSPp}} and Theorem \ref{thm:solution} with the claims in its proof follow \cite{vandenHeuvel}, while the proof of Claim 2 in the theorem is inspired by \cite{vandenHeuvel}.

\section{Dynamic  Program Formulation of Dynamic Lot Sizing Problem and its Solution}\label{sec:DynamicprogramSolution}

\noindent We propose a dynamic program formulation of {\bf{DLSP}} in this section.  Before we do this, we state and prove the following lemma that is the key which allows us to have the formulation and then solving it efficiently.

\begin{lemma}\label{lem:optimalproperties}
In {\bf{DLSP}}, suppose we produce in the $i^{th}$ period, where $1 \leq i \leq N-1$, then $I^\ast_{i+1} = 0$. 
\end{lemma}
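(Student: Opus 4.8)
The plan is to argue by contradiction with a one-unit ``shift-production-forward'' exchange. Assume an optimal solution $(x^\ast,y^\ast)$ of \textbf{DLSP} produces in period $i$ (so $x_i^\ast + y_i^\ast \geq 1$, the variables being integer) but has $I_{i+1}^\ast \geq 1$. I would split into two cases depending on whether the period-$i$ production uses manufacturing, and in each case exhibit a feasible solution with strictly smaller cost.

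\emph{Case 1: $y_i^\ast \geq 1$.} Set $\hat y_i = y_i^\ast - 1$, $\hat y_{i+1} = y_{i+1}^\ast + 1$, and keep all other decisions the same. Among the state variables only $I_{i+1}$ changes, dropping by one; since $I_{i+1}^\ast \geq 1$ it stays nonnegative, and $I_j$ for $j \geq i+2$ as well as every $J_j$ are unchanged, so the new point is feasible. The cost drops by $h_{s,i}$ (serviceable holding across period $i$) and by $c_{m,i}$ (manufacturing in period $i$), and possibly by the setup $\Delta K_{m,i}$, while it rises by at most $c_{m,i+1}$ and at most $\Delta K_{m,i+1}$ in period $i+1$; the net change is at most $-h_{s,i} - c_{m,i} + \Delta K_{m,i+1} + c_{m,i+1}$, which is negative by Assumption \ref{ass:parameters}(a) --- a contradiction.

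\emph{Case 2: $y_i^\ast = 0$, $x_i^\ast \geq 1$.} Set $\hat x_i = x_i^\ast - 1$, $\hat x_{i+1} = x_{i+1}^\ast + 1$, all else unchanged. From $J_{i+1} = J_i + R_{i+1} - x_i$, the level $J_{i+1}$ rises by one, so $\hat x_{i+1} \leq \hat J_{i+1}$ still holds; since $J_{i+1}$ and $x_{i+1}$ both rise by one, $J_j$ for $j \geq i+2$ is unchanged, $I_{i+1}$ drops by one but stays nonnegative, and all later states are untouched, so feasibility holds. The cost drops by $h_{s,i}$ and by $c_{r,i}$ (and possibly the setup $K_{r,i}$) in period $i$, rises by at most $c_{r,i+1}$ and at most $K_{r,i+1}$ in period $i+1$, and rises by exactly $h_{c,i}$ because the term $h_{c,i}[J_i - x_i]$ grows by one (note $J_i - x_i \geq 0$ is preserved); the net change is at most $-h_{s,i} - c_{r,i} + K_{r,i+1} + h_{c,i} + c_{r,i+1}$, negative by Assumption \ref{ass:parameters}(b) --- again a contradiction.

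Both cases being impossible forces $I_{i+1}^\ast = 0$. The feasibility bookkeeping is routine; the one place that genuinely needs care is Case 2, where pushing a remanufactured unit into period $i+1$ must be reconciled with the core-availability constraint $x_{i+1} \leq J_{i+1}$ and with the core holding-cost term. This is precisely why Assumption \ref{ass:parameters}(b) carries the extra $h_{c,i}$ that Assumption \ref{ass:parameters}(a) does not, and why the restriction $i \leq N-1$ (so that period $i+1$ exists) is imposed.
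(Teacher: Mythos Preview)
Your proof is correct and follows essentially the same one-unit shift-forward exchange argument as the paper, invoking Assumption~\ref{ass:parameters}(a) for the manufacturing case and Assumption~\ref{ass:parameters}(b) for the remanufacturing case. Your version is in fact more explicit than the paper's about the feasibility bookkeeping (particularly the core-availability constraint $x_{i+1} \le J_{i+1}$ and the source of the extra $h_{c,i}$ term), but the underlying idea is identical.
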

\begin{proof}
\noindent We show that $I^{\ast}_{i+1} = 0$ by assuming that $I^\ast_{i+1} \geq 1$, and show that this leads to a contradiction.  Since we produce in the $i^{th}$ period, we have manufacturing or remanufacturing or both in the $i^{th}$ period,  that is, $x_i^\ast + y_i^\ast \geq 1$.   Suppose we have remanufacturing in the $i^{th}$ period, that is, $x_i^\ast \geq 1$.   By reducing remanufacturing by 1 unit, noting that demand in period is still satisfied since we assume that $I^\ast_{i+1} \geq 1$, we have a cost reduction of at least $c_{r,i} + h_{s,i}$, but we incur an additional holding cost of a unit of core of $h_{c,i}$.  The unit of serviceable product can be ``reinstated" through remanufacturing in the $(i+1)^{th}$ period by incurring a cost of at most $K_{r,i+1} + c_{r,i+1}$.  In this case, it is easy to see that the total cost is reduced by at least $c_{r,i} + h_{s,i} -  h_{c,i} - K_{r,i+1} - c_{r,i+1}$ which is positive by Assumption \ref{ass:parameters}(b).  This is a contradiction to optimality. We have a similar argument to show contradiction if we have manufacturing in the $i^{th}$ period using Assumption \ref{ass:parameters}(a). Therefore, we show that $I^\ast_{i+1} = 0$. 
\end{proof}

\vspace{10pt}

\noindent The results in the above lemma is a strong version of the well-known zero-inventory property of the dynamic lot sizing problem, which first appeared in \cite{Wagner2}.  It says that under optimality, if we produce in the current period, then the optimal inventory policy is to have no serviceable product available at the beginning of the next period.  That is, we only produce enough to satisfy demand in the current period.

\begin{corollary}\label{cor:optimalproperties}
In {\bf{DLSP}}, suppose we produce in the $i^{th}$ period for some $i$, $1 \leq i \leq N-1$.  Suppose further that $I_i^\ast = 0$.  Then, $x_i^\ast + y_i^\ast = D_i$.
\end{corollary}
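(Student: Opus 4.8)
The plan is to derive this as an immediate consequence of Lemma \ref{lem:optimalproperties} together with the serviceable-product inventory balance constraint $I_{i+1} = I_i + x_i + y_i - D_i$ that defines the optimal trajectory. First I would invoke the hypothesis that we produce in the $i^{th}$ period, with $1 \leq i \leq N-1$, which is precisely the setting of Lemma \ref{lem:optimalproperties}; that lemma then yields $I^\ast_{i+1} = 0$.

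Next I would write down the balance equation evaluated at the optimal solution, namely $I^\ast_{i+1} = I^\ast_i + x^\ast_i + y^\ast_i - D_i$, as stipulated in the definition of $I_i^\ast$ following the statement of {\bf{DLSP}}. Substituting the assumed $I_i^\ast = 0$ and the just-derived $I^\ast_{i+1} = 0$ into this identity gives $0 = x^\ast_i + y^\ast_i - D_i$, i.e. $x^\ast_i + y^\ast_i = D_i$, which is the claim.

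There is essentially no obstacle here: the corollary is a one-line bookkeeping consequence once Lemma \ref{lem:optimalproperties} is in hand. The only point worth stating carefully is that the range restriction $i \leq N-1$ is exactly what licenses the use of the lemma (so that $I^\ast_{i+1}$ is well-defined as an ``inventory at the beginning of the next period'' and the lemma's conclusion applies), and that the equality $x^\ast_i + y^\ast_i = D_i$ sharpens the zero-inventory property into the statement that production in such a period exactly meets that period's demand when no serviceable stock is carried in.
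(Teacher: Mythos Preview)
Your proposal is correct and follows essentially the same approach as the paper: invoke Lemma~\ref{lem:optimalproperties} to obtain $I_{i+1}^\ast = 0$, then substitute this together with the hypothesis $I_i^\ast = 0$ into the inventory balance equation $I_{i+1}^\ast = I_i^\ast + x_i^\ast + y_i^\ast - D_i$ to conclude $x_i^\ast + y_i^\ast = D_i$.
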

\begin{proof}
By Lemma \ref{lem:optimalproperties}, we have $I_{i+1}^\ast = 0$.  The result then follows by observing that $I_{i+1}^\ast = I_i^\ast + x_i^\ast + y_i^\ast - D_i$.
\end{proof}

\begin{remark}\label{rem:optimalproperties}
Under optimality, it is easy to convince ourselves from the proof of Lemma \ref{lem:optimalproperties} that the results in the lemma and Corollary \ref{cor:optimalproperties} still hold if  we let $i = N$ in their statements.  Furthermore, since $D_i > 0$ for $i = 1, \ldots, N$ and $I_1^\ast = I_1 = 0$, Lemma \ref{lem:optimalproperties} and Corollary \ref{cor:optimalproperties} imply that we have $I^\ast_i = 0$,  $y_{i}^\ast + x_i^\ast = D_i$ for $i = 1, \ldots, N$.  This means that in each period, we always produce and only enough to satisfy demand for the period. 
\end{remark}  

\noindent Let us now consider a dynamic program which we use to solve {\bf{DLSP}}.  The design of the dynamic program is motivated by Lemma \ref{lem:optimalproperties},  Corollary \ref{cor:optimalproperties} and Remark \ref{rem:optimalproperties}. 

\vspace{10pt}

\noindent For $i = 1, \ldots, N$, and $J_i \geq 0$,
\begin{eqnarray}\label{eq:recursive}
C_i^{\ast\ast}(J_i) & := & \min \{ K_{r,i} \mathbb{I}(x_i)+ \Delta K_{m,i} \mathbb{I}(D_i - x_i) + h_{c,i} [J_i - x_i] + c_{r,i} x_i  +  c_{m,i}[D_i  - x_i] + \nonumber \\ 
 & &  C_{i+1}^{\ast\ast}(J_i - x_i + R_{i+1}) \  {\mbox{\Large$|$}}\    x_i \leq J_i,  x_i \leq D_i, x_i \in \mathbb{Z}^+ \}
\end{eqnarray}
\noindent We have the convention that $C_{N+1}^{\ast\ast}(J_{N+1}) = 0$ for all $J_{N+1} \geq 0$.

\vspace{10pt}

\noindent The first term within the minimization in (\ref{eq:recursive}) can be interpreted as  the setup cost for remanufacturing in the $i^{th}$ period; the second term is the setup cost for manufacturing in the $i^{th}$ period;  the third term is the holding cost during the $i^{th}$ period for cores not remanufactured in the period; the fourth term is the remanufacturing cost for serviceable product in the $i^{th}$ period to satisfy demand in the period; the fifth term is the manufacturing cost for serviceable products in the $i^{th}$ period to satisfy demand in the period; the last term can be interpreted as the optimal cost from the $(i+1)^{th}$ period up to the end of the time horizon.

%
%
%
%
 
 \vspace{10pt}
 
 \noindent The following lemma relates the above dynamic program to {\bf{DLSP}}:
\begin{lemma}\label{lem:relation}
We have $C^\ast = C_1^{\ast\ast}(R_1)$.
\end{lemma}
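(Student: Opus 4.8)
The plan is to prove the identity $C^\ast = C_1^{\ast\ast}(R_1)$ by two inequalities, exploiting the fact (from Remark~\ref{rem:optimalproperties}) that an optimal solution to \textbf{DLSP} has $I_i^\ast = 0$ and $x_i^\ast + y_i^\ast = D_i$ for every $i$, so that the only genuinely free decision in each period is the remanufactured quantity $x_i$ (with $y_i = D_i - x_i$ forced), subject to $0 \le x_i \le \min\{J_i, D_i\}$. Under this reduction the core inventory recursion becomes $J_{i+1} = J_i - x_i + R_{i+1}$, which is exactly the state transition built into \eqref{eq:recursive}, and the per-period cost of \textbf{DLSP} with $y_i = D_i - x_i$ and $I_{i+1}=0$ is precisely the expression inside the braces of \eqref{eq:recursive} (here one uses $\mathbb{I}(y_i) = \mathbb{I}(D_i - x_i)$ and $h_{s,i} I_{i+1} = 0$).

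For the inequality $C^\ast \ge C_1^{\ast\ast}(R_1)$, I would take the optimal solution $x_i^\ast, y_i^\ast$ of \textbf{DLSP}; by Remark~\ref{rem:optimalproperties} it satisfies $y_i^\ast = D_i - x_i^\ast$ and $0 \le x_i^\ast \le \min\{J_i^\ast, D_i\}$, so $(x_i^\ast)$ is feasible in each stage of the recursion, and its cost, written period by period, equals $\sum_{i=1}^N$ of the bracketed terms in \eqref{eq:recursive}. An induction on $i$ running from $N$ down to $1$ then shows that the tail cost $\sum_{j=i}^N(\cdots)$ incurred by the optimal DLSP policy from period $i$ onward is at least $C_i^{\ast\ast}(J_i^\ast)$ — indeed each $C_i^{\ast\ast}$ is defined as a minimum over admissible $x_i$, so plugging in $x_i^\ast$ gives an upper bound on nothing, rather a lower bound via the recursion: $C_i^{\ast\ast}(J_i^\ast) \le (\text{period-}i\text{ cost}) + C_{i+1}^{\ast\ast}(J_{i+1}^\ast) \le (\text{period-}i\text{ cost}) + (\text{tail from }i+1)$. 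Taking $i=1$ and using $J_1^\ast = R_1$ yields $C_1^{\ast\ast}(R_1) \le C^\ast$.

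For the reverse inequality $C_1^{\ast\ast}(R_1) \le C^\ast$ — wait, that is what I just proved; the other direction is $C^\ast \le C_1^{\ast\ast}(R_1)$. Here I would unwind the recursion to extract a minimizing sequence: let $\hat x_1$ attain the minimum in $C_1^{\ast\ast}(R_1)$, set $\hat J_2 = R_1 - \hat x_1 + R_2$, let $\hat x_2$ attain the minimum in $C_2^{\ast\ast}(\hat J_2)$, and so on, producing $\hat x_i, \hat J_i$ for all $i$ with $C_1^{\ast\ast}(R_1) = \sum_{i=1}^N(\text{bracketed cost at }(\hat x_i, \hat J_i))$. Then define $\hat y_i := D_i - \hat x_i \ge 0$ and $\hat I_{i+1} := 0$; one checks the constraints of \textbf{DLSP} are all satisfied ($\hat J_i \ge 0$ and $\hat x_i \le \hat J_i$ from the recursion's feasible set, $\hat I_i = 0 \ge 0$, $\hat x_i, \hat y_i \in \mathbb{Z}^+$, and both balance equations), and that the \textbf{DLSP} objective at this feasible point equals $C_1^{\ast\ast}(R_1)$ exactly. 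Hence $C^\ast \le C_1^{\ast\ast}(R_1)$, completing the proof.

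The main obstacle is the bookkeeping in matching the two cost expressions term-by-term: one must be careful that the $h_{s,i}I_{i+1}$ term of \textbf{DLSP} genuinely vanishes (which needs Remark~\ref{rem:optimalproperties}, valid only along optimal / zero-inventory trajectories) and that $\mathbb{I}(y_i)$ is correctly replaced by $\mathbb{I}(D_i - x_i)$, and — more subtly — that the recursion's feasible set $\{x_i \le J_i,\ x_i \le D_i,\ x_i \in \mathbb{Z}^+\}$ does not accidentally exclude some optimal DLSP policy or include an infeasible one. Since Remark~\ref{rem:optimalproperties} guarantees $x_i^\ast \le D_i$ at optimality, the restriction $x_i \le D_i$ in \eqref{eq:recursive} is harmless for the $\ge$ direction, and for the $\le$ direction it only shrinks the admissible set, keeping every recursion-feasible point DLSP-feasible; so the argument goes through, but this compatibility is the point that deserves the most care.
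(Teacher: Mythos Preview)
Your proposal is correct and follows essentially the same two-inequality argument as the paper: use Remark~\ref{rem:optimalproperties} to feed the optimal \textbf{DLSP} policy $(x_i^\ast)$ into the recursion for $C^\ast \ge C_1^{\ast\ast}(R_1)$, and conversely unwind the recursion's minimizers $(\hat x_i)$ together with $\hat y_i = D_i - \hat x_i$ to obtain a feasible \textbf{DLSP} point for $C^\ast \le C_1^{\ast\ast}(R_1)$. Your write-up is more explicit about the induction and the term-by-term cost matching than the paper's, but the underlying idea is identical.
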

\begin{proof}
We are given an optimal solution $x_i^\ast, y_i^\ast, i = 1, \ldots, N$, to {\bf{DLSP}}.   We have $I_{1}^\ast = 0$ and by Remark \ref{rem:optimalproperties}, $I_{i}^\ast = 0$ for   $i =2 \ldots, N$  and  $x_i^\ast + y_i^\ast = D_{i}$ for $i = 1, \ldots, N$.   We also have $J_{i+1}^\ast = J_{i}^\ast - x_{i}^\ast + R_{i+1}$, $i =1, \ldots, N$, where $J_{l}^\ast = R_{1}$.  We see that $x^\ast_{i}, i = 1, \ldots, N$ is a feasible solution to the dynamic program (\ref{eq:recursive}), where $J_{1} = R_{1}$, with its objective function value equal to $C^\ast$.  Hence, we have $C^\ast \geq C_1^{\ast\ast}(R_{1})$.  On the other hand, given an optimal solution $ x^{\ast\ast}_{i}, i = 1, \ldots, N$, to the dynamic program (\ref{eq:recursive}), where $J_{1} = R_{1}$.   It is easy to convince ourselves that $x_{i}^{\ast\ast}, D_i - x_{i}^{\ast\ast},  i = 1, \ldots, N$, is a feasible solution to {\bf{DLSP}}, since all its constraints are satisfied, and its objective function value is equal to $C_1^{\ast\ast}(R_{1})$.  Therefore $C^\ast \leq C_1^{\ast\ast}(R_{1})$.  The lemma is hence proved.
\end{proof}

\vspace{10pt}

\noindent By the above lemma, we are able to solve {\bf{DLSP}} by solving the dynamic program (\ref{eq:recursive}) with $J_1 = R_1$.   We next describe an algorithm to solve {\bf{DLSP}} by solving this dynamic program.  Before we do this, we have a lemma below that is the basis for the algorithm and further allows us to show Theorem \ref{thm:timecomplexity}:
\begin{lemma}\label{lem:dynamicprogramobservation}
When solving {\bf{DLSP}} using the dynamic program (\ref{eq:recursive}), where we set $J_1 = R_1$, for all $i = 2, \ldots, N$,  we evaluate $C_i^{\ast\ast}(J_i)$ in (\ref{eq:recursive}) for $J_i = \hat{J}_i + R_i$, where $\hat{J}_i$ takes integer value between $((\ldots((R_1-D_1)^+ + (R_2-D_2))^+ + \ldots)^+ + (R_{i-1} - D_{i-1}))^+$ and $R_1 + \ldots + R_{i-1}$ inclusively.
\end{lemma}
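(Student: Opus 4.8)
The plan is to track, period by period, the set of values of the core-inventory argument $J_i$ at which the recursion \eqnok{eq:recursive} is actually invoked when it is seeded at $J_1 = R_1$, and to show by induction on $i$ that this set is exactly $\{\hat J_i + R_i : \hat J_i \in \mathbb{Z},\ L_i \le \hat J_i \le U_i\}$, where $L_i := ((\ldots((R_1-D_1)^+ + (R_2-D_2))^+ + \ldots)^+ + (R_{i-1}-D_{i-1}))^+$ and $U_i := R_1 + \cdots + R_{i-1}$. The quantity $\hat J_i$ should be read as the number of cores carried into period $i$ before the return $R_i$ has been added, i.e. $\hat J_i = J_i - R_i$; equivalently $\hat J_i$ is the post-remanufacturing core level $J_{i-1} - x_{i-1}$ from the previous period. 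This reinterpretation is what makes the recursion transparent: the call $C_{i+1}^{\ast\ast}(J_i - x_i + R_{i+1})$ has argument $\hat J_{i+1} + R_{i+1}$ with $\hat J_{i+1} = \hat J_i + R_i - x_i$, and the constraints $0 \le x_i \le \min\{J_i, D_i\} = \min\{\hat J_i + R_i, D_i\}$ become, after substitution, $\hat J_i + R_i - D_i \le \hat J_{i+1} \le \hat J_i + R_i$.

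The induction then runs as follows. For the base case $i = 2$: the only call at level $1$ is $C_1^{\ast\ast}(R_1)$, i.e. $\hat J_1 = 0$; here $x_1$ ranges over integers in $[0, \min\{R_1, D_1\}]$, so $\hat J_2 = R_1 - x_1$ ranges over integers in $[R_1 - \min\{R_1,D_1\},\ R_1] = [(R_1 - D_1)^+,\ R_1] = [L_2, U_2]$, and every such value is attained because the $x_1$-range is a full integer interval. For the inductive step, assume the set of calls at level $i$ consists of all integers $\hat J_i \in [L_i, U_i]$. From a given $\hat J_i$, the feasible $\hat J_{i+1}$ form the integer interval $[\ (\hat J_i + R_i - D_i)^+\ \vee\ ?\ ,\ \hat J_i + R_i\ ]$ — more precisely, since $\hat J_{i+1} = \hat J_i + R_i - x_i \ge 0$ is forced by $x_i \le J_i$, and $x_i \le D_i$ forces $\hat J_{i+1} \ge \hat J_i + R_i - D_i$, the set of reachable $\hat J_{i+1}$ from this single $\hat J_i$ is $\{\,\hat J_{i+1} \in \mathbb{Z} : (\hat J_i + R_i - D_i)^+ \le \hat J_{i+1} \le \hat J_i + R_i\,\}$. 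Taking the union over $\hat J_i \in [L_i, U_i]$: the maximum of the upper endpoints is $U_i + R_i = R_1 + \cdots + R_i = U_{i+1}$; the minimum of the lower endpoints is $(L_i + R_i - D_i)^+ = L_{i+1}$ (using monotonicity of $t \mapsto (t + R_i - D_i)^+$ in $t$, so the minimum is achieved at $\hat J_i = L_i$). It remains to check that the union of these overlapping integer intervals is the entire integer interval $[L_{i+1}, U_{i+1}]$ with no gaps; this holds because consecutive source values $\hat J_i$ and $\hat J_i + 1$ produce intervals whose upper endpoints differ by exactly $1$, so the images chain together without omission, and one checks the lowest interval (from $\hat J_i = L_i$) already reaches down to $L_{i+1}$.

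The one genuinely delicate point — the step I expect to be the main obstacle — is the no-gap argument together with the verification that the extremal endpoints behave as claimed under the $(\cdot)^+$ operations; in particular one must be a little careful that $(L_i + R_i - D_i)^+$ really equals the claimed nested expression $L_{i+1}$, which is immediate from the definition of $L_{i+1}$ once one observes $L_i + R_i - D_i$ need not be nonnegative, and that the interval emanating from $\hat J_i = L_i$ has upper endpoint $L_i + R_i \ge L_{i+1}$, so it does connect to the rest. Everything else is bookkeeping with the substitution $\hat J_i = J_i - R_i$ and the observation that each individual $x_i$-feasible set $\{x_i \in \mathbb{Z}^+ : x_i \le \min\{J_i, D_i\}\}$ is a contiguous block of integers, so its image under the affine map $x_i \mapsto \hat J_i + R_i - x_i$ is again contiguous. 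I would present the argument entirely in terms of $\hat J_i$, state the base and inductive steps as above, and close with the remark that this is precisely the claim since $J_i = \hat J_i + R_i$.
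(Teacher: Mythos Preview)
Your proof is correct and follows essentially the same inductive route as the paper: introduce $\hat J_i = J_i - R_i$, verify the base case $i=2$ directly from the constraints on $x_1$, and in the inductive step use that $\hat J_{i+1}$ ranges over $[(\hat J_i + R_i - D_i)^+,\ \hat J_i + R_i]$ together with monotonicity of $t \mapsto (t)^+$ to obtain the claimed bounds $L_{i+1}$ and $U_{i+1}$. You are in fact slightly more careful than the paper, which does not spell out the no-gap argument; since the lemma (as used by Algorithm~\ref{alg:optimaldynamicprogram} and Theorem~\ref{thm:timecomplexity}) only requires the containment $\hat J_i \in [L_i,U_i]$ rather than exact equality of sets, that extra work is harmless but not needed.
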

\begin{proof}
We prove the statement in the lemma by induction on $i = 2, \ldots, N$.   We have from (\ref{eq:recursive}) where $i = 1$  that we only need to find $C_2^{\ast\ast}(J_2)$ for $J_2 = R_1 + R_2 - x_1$, where $x_1 \leq \min \{ R_1, D_1\} = R_1 - (R_1 - D_1)^+$,  $x_1 \in \mathbb{Z}^+$.   Therefore, if we let $\hat{J}_2 = R_1 - x_1$, then we have $J_2 = \hat{J}_2 + R_2$, where $\hat{J}_2$ takes integer value between $(R_1 - D_1)^+$ and $R_1$ inclusively.   Hence, statement holds for $i = 2$.   Suppose the statement in the lemma holds for $i = i_0$, where $i_0 <N$.   We have from (\ref{eq:recursive}), where $i = i_0$, that we only need to find $C_{i_0+1}^{\ast\ast}(J_{i_0+1})$ for $J_{i_0+1} = J_{i_0} - x_{i_0} + R_{i_0+1}$, where $x_{i_0} \leq \min \{J_{i_0}, D_{i_0} \} = J_{i_0} - (J_{i_0} - D_{i_0})^+$,  $x_{i_0} \in \mathbb{Z}^+$.  Therefore, if we let $\hat{J}_{i_0+1} = J_{i_0} - x_{i_0}$, we have $J_{i_0+1} = \hat{J}_{i_0+1} + R_{i_0+1}$, where $\hat{J}_{i_0+1}$ lies between $(J_{i_0} - D_{i_0})^+$ and $J_{i_0}$.  By induction hypothesis, $J_{i_0} = \hat{J}_{i_0} + R_{i_0}$, where $\hat{J}_{i_0}$ takes integer value between  $((\ldots((R_1-D_1)^+ + (R_2-D_2))^++ \ldots)^+ + (R_{i_0-1} - D_{i_0-1}))^+$ and $R_1 + \ldots + R_{i_0-1}$ inclusively.  Since $\hat{J}_{i_0+1}$ lies between $(\hat{J}_{i_0} + R_{i_0} - D_{i_0})^+$ and $\hat{J}_{i_0} + R_{i_0}$, with  $\hat{J}_{i_0}$ taking integer value between  $((\ldots((R_1-D_1)^+ + (R_2-D_2))^+ + \ldots)^+ + (R_{i_0-1} - D_{i_0-1}))^+$ and $R_1 + \ldots + R_{i_0-1}$ inclusively, we see that the statement in the lemma holds for $i = i_0 + 1$.  Therefore, the statement in the lemma holds for all $i = 2, \ldots, N$ by induction.
\end{proof}

\begin{algorithm}\label{alg:optimaldynamicprogram}  
\ 
\begin{adjustwidth}{0.5cm}{}
{\bf{Step 1}}.  Iterate from $i = N$ to $2$, and use previously computed values for $C_{i+1}^{\ast\ast}(J_{i+1})$, with $C_{N+1}^{\ast\ast}(J_{N+1}) = 0$,  to find $C_i^{\ast\ast}(J_i)$ from (\ref{eq:recursive}) with $J_i = \hat{J}_i + R_i$ with $\hat{J}_i$ taking integer value between  $((\ldots((R_1-D_1)^+ + (R_2-D_2))^+ + \ldots)^+ + (R_{i-1} - D_{i-1}))^+$ and $R_1 + \ldots + R_{i-1}$ inclusively. 

\vspace{5pt}

\noindent {\bf{Step 2}}.  Find $C_1^{\ast\ast}(R_1)$ using (\ref{eq:recursive}), where $C_2^{\ast\ast}(\hat{J}_2 + R_2)$ in (\ref{eq:recursive}), with $\hat{J}_2$ taking integer value between $(R_1 - D_1)^+$ and $R_1$ inclusively, have been computed in Step 1.
\end{adjustwidth}
\end{algorithm}
After executing the algorithm, we can determine $x_i^\ast$ for $i = 1, \ldots, N$.   If $x_i^\ast = 0$, then we do not remanufacture when we produce, otherwise, we remanufacture when we produce.

\vspace{10pt}

\noindent Theorem \ref{thm:timecomplexity} below states the complexity to solve {\bf{DLSP}} using its dynamic program formulation.   
 
\begin{theorem}\label{thm:timecomplexity}
{\bf{DLSP}} can be solved using ${O}\left( \sum_{i=1}^N \sum_{k=L_i}^{R_1 + \ldots + R_{i-1}}(\min\{k + R_i, D_i \} + 1)^2 \right)$, where $L_i = ((\ldots((R_1-D_1)^+ + (R_2-D_2))^+ + \ldots)^+ + (R_{i-1} - D_{i-1}))^+$, multiplication, addition and comparison operations on parameters of the model.
\end{theorem}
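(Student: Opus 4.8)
The plan is to combine the two structural facts already in hand---Lemma~\ref{lem:relation}, which says that solving {\bf{DLSP}} amounts to evaluating $C_1^{\ast\ast}(R_1)$ through the recursion~(\ref{eq:recursive}), and Lemma~\ref{lem:dynamicprogramobservation}, which pins down exactly the arguments $J_i$ at which~(\ref{eq:recursive}) is ever queried---with a direct count of the arithmetic needed for one evaluation of~(\ref{eq:recursive}).

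First I would record, using Lemma~\ref{lem:relation} and Algorithm~\ref{alg:optimaldynamicprogram}, that {\bf{DLSP}} is solved by Step~1, which computes $C_i^{\ast\ast}(J_i)$ for $i = N, N-1, \ldots, 2$, and then Step~2, which computes $C_1^{\ast\ast}(R_1)$. By Lemma~\ref{lem:dynamicprogramobservation}, the arguments needed for each $i = 2, \ldots, N$ are exactly $J_i = \hat{J}_i + R_i$ with $\hat{J}_i$ an integer in $[L_i,\, R_1 + \ldots + R_{i-1}]$, with $L_i$ as in the statement; and for $i = 1$ the only argument needed is $J_1 = R_1$, which fits the same pattern under the convention that $L_1 = 0$ and the empty sum $R_1 + \ldots + R_0$ equals $0$, so that the $i = 1$ inner sum degenerates to the single term $k = 0$ and accounts for Step~2.

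Next I would bound the cost of one evaluation of~(\ref{eq:recursive}) at $J_i = \hat{J}_i + R_i$. The minimization in~(\ref{eq:recursive}) ranges over $x_i \in \{0, 1, \ldots, \min\{J_i, D_i\}\}$, a set of $\min\{\hat{J}_i + R_i, D_i\} + 1$ integers. For each such $x_i$: evaluating the five cost terms $K_{r,i}\mathbb{I}(x_i) + \Delta K_{m,i}\mathbb{I}(D_i - x_i) + h_{c,i}[J_i - x_i] + c_{r,i}x_i + c_{m,i}[D_i - x_i]$ takes $O(1)$ multiplications and additions; forming the argument $J_i - x_i + R_{i+1}$ of $C_{i+1}^{\ast\ast}(\cdot)$ takes $O(1)$ additions; and---this is the only point that is not pure bookkeeping---that argument equals $\hat{J}_{i+1} + R_{i+1}$ with $\hat{J}_{i+1} \in [L_{i+1},\, R_1 + \ldots + R_i]$ by the induction step inside the proof of Lemma~\ref{lem:dynamicprogramobservation}, so its value has already been stored during the previous pass of Step~1 (for $i = N$ the lookup is trivial since $C_{N+1}^{\ast\ast} \equiv 0$), and retrieving it plus summing the six terms is $O(1)$ additions. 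Hence all $\min\{\hat{J}_i + R_i, D_i\} + 1$ candidate values are produced in $O(\min\{\hat{J}_i + R_i, D_i\} + 1)$ operations, and selecting the minimum---say by sorting the candidate list, for which even a crude quadratic sort suffices---costs $O\left((\min\{\hat{J}_i + R_i, D_i\} + 1)^2\right)$ sort operations; so one evaluation costs $O\left((\min\{\hat{J}_i + R_i, D_i\} + 1)^2\right)$ multiplication, addition and sort operations in total.

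Finally I would sum this over the arguments identified in the second step: over $\hat{J}_i = L_i, \ldots, R_1 + \ldots + R_{i-1}$ and over $i = 1, \ldots, N$ (the $i = 1$ term reproducing the cost of Step~2 under the stated convention), which yields the total
\[
O\!\left( \sum_{i=1}^N \sum_{k=L_i}^{R_1 + \ldots + R_{i-1}} (\min\{k + R_i, D_i\} + 1)^2 \right),
\]
the claimed bound. The only genuinely non-routine ingredient is the guarantee that every lookup $C_{i+1}^{\ast\ast}(J_i - x_i + R_{i+1})$ lands on an already-computed entry, i.e. that $J_i - x_i$ never leaves the index range of Lemma~\ref{lem:dynamicprogramobservation}; this is immediate from the induction in that lemma's proof, and everything else is a constant-size count of arithmetic per candidate $x_i$ together with a min/sort over the candidates.
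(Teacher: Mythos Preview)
Your proposal is correct and follows essentially the same approach as the paper's own proof: invoke Lemma~\ref{lem:dynamicprogramobservation} to delimit the set of $J_i$ at which~(\ref{eq:recursive}) must be evaluated, count $\min\{J_i,D_i\}+1$ candidates per evaluation at $O(1)$ arithmetic each, charge $O((\min\{J_i,D_i\}+1)^2)$ for the minimum/sort, and sum over $k$ and $i$. Your explicit justification that each lookup $C_{i+1}^{\ast\ast}(J_i - x_i + R_{i+1})$ lands on an already-computed entry, and your handling of the $i=1$ term via the convention $L_1=0$ and empty sum, are points the paper leaves implicit but do not constitute a different route.
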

\begin{proof}
We solve {\bf{DLSP}} using the dynamic program formulation (\ref{eq:recursive}) through Algorithm \ref{alg:optimaldynamicprogram}.   For each $i = 1, \ldots, N$, by Lemma \ref{lem:dynamicprogramobservation},   (\ref{eq:recursive}) needs to be solved for $J_i = \hat{J}_i + R_i$, where $\hat{J}_i$ runs from $((\ldots((R_1-D_1)^+ + (R_2-D_2))^+ + \ldots)^+ + (R_{i-1} - D_{i-1}))^+$ to $R_1 + \ldots + R_{i-1}$.  For each $J_i$,  there are at most  $\min\{J_i, D_i \}+1$ entries to find their mininum.   Each entry  requires $O(1)$ multiplications, $O(1)$ additions to evaluate, hence leading to a total of $O(\min \{J_i, D_i \} + 1)$ multiplications and additions for all entries.   Finding the minimum in the minimization problem can be achieved by $O((\min \{J_i, D_i \}+1)^2)$ comparison operations.  Hence, for $i = 1, \ldots, N$, for each $J_i$, solving (\ref{eq:recursive}) requires a total of $O((\min \{J_i, D_i \}+1)^2)$ multiplication, addition and comparison operations.  We have $J_i = \hat{J}_i + R_i$, where $\hat{J}_i$ runs from $((\ldots((R_1-D_1)^+ + (R_2-D_2))^+ + \ldots)^+ + (R_{i-1} - D_{i-1}))^+$ to $R_1 + \ldots +R_{i-1}$.   Therefore, for each $i = 1, \ldots, N$, the total number of operations to solve $C^{\ast\ast}_i(J_i)$ in (\ref{eq:recursive}) taking into account various values of $J_i$  is  ${O}\left(\sum_{k=L_i}^{R_1 + \ldots + R_{i-1}}(\min\{k + R_i, D_i \}+1)^2 \right)$, where $L_i = ((\ldots((R_1-D_1)^+ + (R_2-D_2))^+ + \ldots)^+ + (R_{i-1} - D_{i-1}))^+$.  Hence, we have a total  multiplication, addition and comparison operations of ${O}\left( \sum_{i=1}^N \sum_{k=L_i}^{R_1 + \ldots + R_{i-1}}(\min\{k + R_i, D_i \} + 1)^2 \right)$, where $L_i = ((\ldots((R_1-D_1)^+ + (R_2-D_2))^+ + \ldots)^+ + (R_{i-1} - D_{i-1}))^+$, summing $i$ from $1$ to $N$, to solve {\bf{DLSP}} .   \end{proof}

\vspace{10pt}

\noindent With the above theorem, we now proceed to find the time complexity to solve {\bf{PPi}}.  Let us denote $a_{\max} = \max_{1 \leq i \leq n} \{ a_i \}$.  

\begin{corollary}\label{cor:timecomplexity}
{\bf{PPi}} can be solved using ${O}(n^2a_{\max}^3)$ multiplication, addition and comparison operations on $\{a_i \ ; \ i = 1, \ldots, n \}$.
\end{corollary}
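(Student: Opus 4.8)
The plan is to specialize Theorem \ref{thm:timecomplexity} to {\bf{DLSPp}} and then invoke Theorem \ref{thm:solution}. Recall that {\bf{DLSPp}} is the instance of {\bf{DLSP}} with $N = n$, $D_i = a_i$, $R_1 = C$ and $R_i = 0$ for $i = 2, \ldots, N$, together with the cost data listed before Theorem \ref{thm:solution}. By Theorem \ref{thm:solution}, solving {\bf{DLSPp}} solves {\bf{PPi}}, and by Theorem \ref{thm:timecomplexity}, {\bf{DLSPp}} can be solved using ${O}\left( \sum_{i=1}^N \sum_{k=L_i}^{R_1 + \ldots + R_{i-1}}(\min\{k + R_i, D_i \} + 1)^2 \right)$ multiplication, addition and sort operations, with $L_i = ((\ldots((R_1-D_1)^+ + (R_2-D_2))^+ + \ldots)^+ + (R_{i-1} - D_{i-1}))^+$. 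So the whole task reduces to bounding this double sum under the chosen parameter values.

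First I would evaluate the quantities appearing in the bound. Since $R_1 = C$ and $R_j = 0$ for $j \geq 2$, a trivial induction gives $R_1 + \ldots + R_{i-1} = C$ for $i \geq 2$ (and the empty sum $0$ for $i = 1$); and using the identity $\big((x)^+ - a\big)^+ = (x - a)^+$ valid for $a \geq 0$, another easy induction gives $L_i = \big(C - \sum_{j=1}^{i-1} a_j\big)^+ \leq C$ for $i \geq 2$ (with $L_1 = 0$). Hence for each fixed $i \geq 2$ the index $k$ ranges over at most $C + 1$ integers, and because $R_i = 0$ we have $\min\{k + R_i, D_i\} = \min\{k, a_i\} \leq a_i \leq a_{\max}$, so the inner sum is at most $(C+1)(a_{\max}+1)^2$. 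For $i = 1$ the inner sum reduces to the single term $(\min\{C, a_1\}+1)^2 \leq (a_{\max}+1)^2$, which is dominated by the previous bound.

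Summing over $i = 1, \ldots, N = n$ then yields a total of ${O}\big( n\, C\, a_{\max}^2 \big)$ operations. The last step is to note that $C = \frac{1}{2}\sum_{i \in S} a_i \leq \frac{1}{2}\, n\, a_{\max} = {O}(n\, a_{\max})$, whence the operation count becomes ${O}\big(n \cdot n a_{\max} \cdot a_{\max}^2\big) = {O}(n^2 a_{\max}^3)$, as claimed. I do not anticipate a genuine obstacle; the only points needing care are the correct reading of $L_i$ and of the upper summation limit $R_1 + \ldots + R_{i-1}$ from the heavy notation of Theorem \ref{thm:timecomplexity}, the separate handling of the $i = 1$ term, and the substitution $C = {O}(n a_{\max})$ that converts the pseudo-polynomial operation count into the stated bound in $n$ and $a_{\max}$.
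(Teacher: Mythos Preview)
Your proposal is correct and follows essentially the same route as the paper: specialize the bound of Theorem \ref{thm:timecomplexity} to the parameter choices of {\bf{DLSPp}}, bound each inner summand by $(a_{\max}+1)^2$ and the number of inner terms by ${O}(C)$, sum over $i$ to get ${O}(nCa_{\max}^2)$, and finish with $C \leq n a_{\max}$. The only cosmetic difference is that you compute $L_i = \big(C - \sum_{j<i} a_j\big)^+$ explicitly via the identity $((x)^+ - a)^+ = (x-a)^+$, whereas the paper simply bounds the length of the inner summation range by $C$ without evaluating $L_i$.
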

\begin{proof}
We have {\bf{DLSPp}} is a special case of {\bf{DLSP}} with $N = n$, $D_i = a_i, i = 1, \ldots, N$,  $R_1 = C$ and $R_i = 0, i = 2, \ldots, N$.  Furthermore, $\sum_{i=1}^n a_i = 2C$ which implies that $C \leq na_{\max}$.    Let us now apply these on Theorem \ref{thm:timecomplexity} to find the number of multiplication, addition and comparison operations needed to solve {\bf{PPi}}.  We have
\begin{eqnarray*}
& & \sum_{i=1}^N \sum_{k=L_i}^{R_1 + \ldots + R_{i-1}}(\min\{k + R_i, D_i \} + 1)^2 \\
& = & (\min\{ R_1, D_1 \}  + 1)^2 + \sum_{i=2}^N \sum_{k=L_i}^{R_1 + \ldots + R_{i-1}}(\min\{k + R_i, D_i \} + 1)^2 \\
& = & (\min\{ C, a_1 \}  + 1)^2 + \sum_{i=2}^n \sum_{k=L_i}^{C}(\min\{k, a_i \} + 1)^2 \\
& \leq & (a_{\max} + 1)^2 + \sum_{i=2}^n \sum_{k=L_i}^C (a_{\max} + 1)^2 \\
& \leq & (a_{\max} + 1)^2+ (n-1) C (a_{\max} + 1)^2 \\
& = & O(n C a_{\max}^2).
\end{eqnarray*}
The result in the corollary then follows from the above since  $C \leq na_{\max}$.
\end{proof}

\section{Numerical Studies}\label{sec:Numerical}

We implement Algorithm \ref{alg:optimaldynamicprogram} for {\bf{DLSP}} using Matlab R2024b, and run the resulting Matlab program on a Windows 11 desktop with 13th Gen Intel(R) Core and installed RAM of 16GB.   To simplify the implementation, we let $\hat{J}_i$ runs from 0 to $R_1 + \ldots + R_{i-1}$ instead of from $((\ldots((R_1-D_1)^+ + (R_2-D_2))^+ + \ldots)^+ + (R_{i-1} - D_{i-1}))^+$ to $R_1 + \ldots + R_{i-1}$ as in the algorithm.  We run the Matlab program on {\bf{PPi}}s in the form of {\bf{DLSP}}s in our experiments.

\vspace{10pt}

\noindent For ease in presentation, let us denote $\{ a_1, \ldots, a_n \}$ in {\bf{PPi}} by $\Omega$, where the elements in the set are ordered accordingly to their indices.  

\vspace{10pt}

\noindent In Table \ref{table1}, we report the outcome upon solving different {\bf{PPi}} by running the Matlab program which implements Algorithm \ref{alg:optimaldynamicprogram}.  It indicates that the algorithm is able to solve these {\bf{PPi}}s correctly.

\begin{table}[ht]
\centering
\begin{tabular}{||c|c|c|c|c||c|c||}  \hline \hline
  $n$  & $\Omega$                   &   $C$      &  $A$ exists?      &  $A$              &  Solution found?      & $C_1^{\ast\ast}(R_1)$ \\ \hline \hline
  3     &  $\{10, 34, 40 \}$          &   $42$    &  No                     &   -                  &  No                            &  $46$ \\ \hline
  3     &   $\{10,30, 20 \}$          &  $30$     &  Yes                     & $\{ 2 \}$    &  Yes                           &  $33$  \\ \hline
  5      & $\{10, 33, 40, 5, 8\}$   & $48$      & Yes                     & $\{ 3,5 \}$  & Yes                            & $53$ \\ \hline
  5      & $\{10, 33, 38, 5, 8 \}$  & $47$      & No                     & -                     & No                             & $53$ \\ \hline
  8      & $\{10, 33, 38, 5,$         & $208$    & Yes                     & $\{ 4,7,8\}$  &  Yes                   & $216$ \\ 
           & $50, 77, 89, 114 \}$    &                &                            &                         &                                &             \\ \hline
  8      & $\{10, 33, 38, 5,$         & $210$    & No                     &  -                      &  No                         & $219$ \\
           &  $52, 79, 89, 114\}$    &                &                           &                          &                               &             \\ \hline
  10    & $\{10, 33, 38, 5, 8,$     & $68$      & Yes                    & $\{ 3,4,7,9,10 \}$ & Yes                      & $78$ \\
          & $ 10, 6, 7, 11, 8 \}$       &                &                          &                        &                                   &           \\ \hline        
  10   & $\{10, 33, 40, 5, 8,$      & $69$       & Yes                   & $\{ 3,6,9,10 \}$ & Yes                  & $79$ \\
          & $10, 6, 7, 11, 8 \}$        &                &                           &                        &                                   &           \\ \hline\hline
\end{tabular}
\caption{Experimentation with an Implementation of Algorithm \ref{alg:optimaldynamicprogram} on solving {\bf{PPi}}}\label{table1}
\end{table}

\vspace{10pt}

\noindent We further test our algorithm on the data sets for {\bf{PPi}} found in \cite{Burkardt}.   Results are given in Table \ref{table2}.   As we can see from the table, our algorithm is able to solve all these instances of the partition problem.

\begin{table}[ht]
\centering
\begin{tabular}{||c|c|c|c|c|c|c||}  \hline \hline
Data Set   &  $n$ &     $C$      &        Solution found?      &  $A$                  & Time Taken (sec)     & $C_1^{\ast\ast}(R_1)$ \\ \hline \hline
P01          &  10   &  27           &             Yes         &   $\{4,7,8,9,10\}$    &         0.0064                 &             37                               \\ \hline
P02          &   10 &  2640       &         Yes        &    $\{2,5,7,8,9,10\}$                          &     0.2863       &          2650                                 \\ \hline
P03          &   9 & 1419        &            Yes         &   $\{5,6,9\}$      &    0.0916               &           1428                               \\ \hline
P04          &   5  &  32            &          Yes          &  $\{ 2,4,5 \}$           &       0.0091                          &            37                             \\ \hline
P05          &    9 &  11            &        Yes        &  $\{5,6,7,8,9\}$        &       0.0048                          &          20                               \\ \hline \hline
\end{tabular}
\caption{Solutions to Instances of the Partition Problem in \cite{Burkardt}  using an Implementation of Algorithm \ref{alg:optimaldynamicprogram} }\label{table2}
\end{table}
\vspace{10pt}

\noindent We next report on the time needed to execute the algorithm for different choices of $n$.   We vary $n$ from $20$ to $120$, in intervals of $2$.  For each $n$, we generate random {\bf{PPi}}, with $a_i$, taken from the rounded uniform distribution on $[1,  \lfloor 10000/n \rfloor]$ for $i = 1, \ldots, n-1$, and $10000 - \sum_{k=1}^{n-1} a_k$ for $i = n$.  By doing this, we have $\sum_{i=1}^n a_i = 10000$.  In Figure \ref{figure1}, $t_n$ stands for the time taken to execute the algorithm for a given $n$.  We find the time taken for this using the ``tic'', ``toc'' features in Matlab.  As we can see from the figure, the graph exhibits near constant behavior, with $t_n$ ranging between $\sim 0.57$ sec to $\sim 0.95$ sec, and most values lying between $\sim 0.61$ sec and $\sim 0.77$ sec.  The results seem to indicate no time dependency on $n$.

\begin{figure}[!htpb]
\centering
\includegraphics[width=13.5cm]{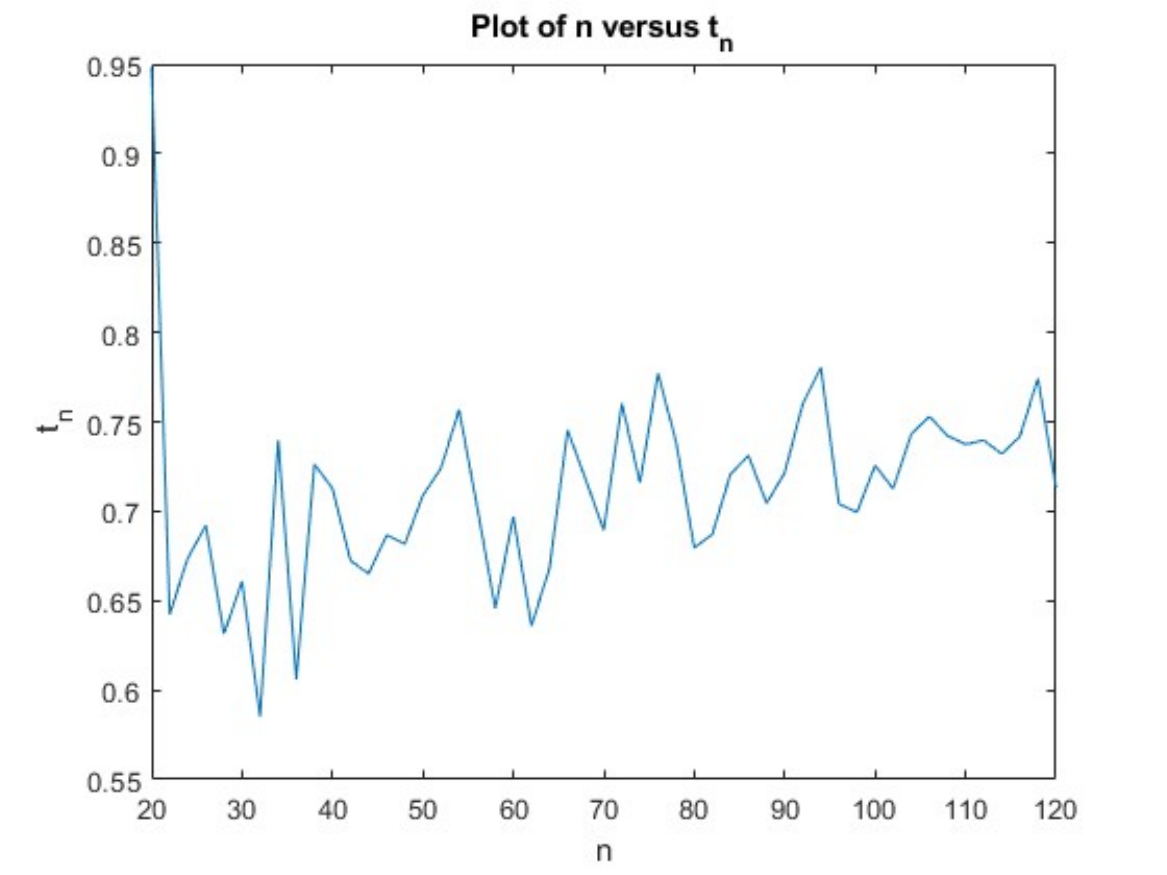}
\centering
\caption{Runtime with $n$} \label{figure1}
\end{figure}

\vspace{10pt}

\noindent We also execute the algorithm for different choices of $C$.  We set $n = 5$ in our experiments.   We let $C_1$ varies from 4000 to 100000 in intervals of 1000.  For each $C_1$, we generate random {\bf{PPi}}, with $a_i$, $i = 1, \ldots, 4$, taken from the rounded uniform distribution on $[1, \lfloor C_1/5 \rfloor ]$, and we set $a_5$ to be $C_1 - \sum_{i=1}^4 a_i$.    Therefore, $\sum_{i=1}^5 a_i = C_1$, and $C = C_1/2$.  We report our findings in Figure \ref{figure2}, where $t_C$ stands for the time taken to execute the algorithm for a given $C$.  We see from the figure that the graph exhibits near linear behavior indicating polynomial time dependency on $C$.


\begin{figure}[!htpb]
\centering
\includegraphics[width=13.5cm]{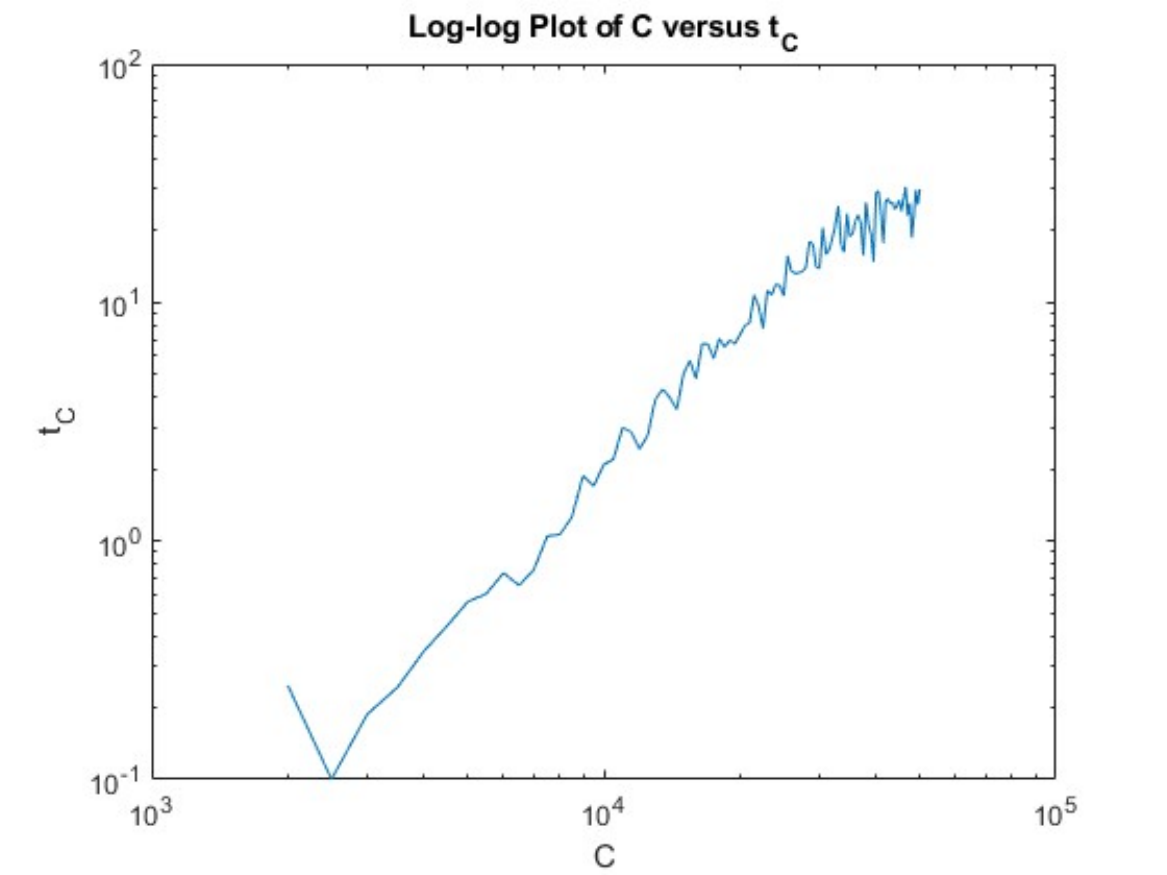}
\centering
\caption{Runtime with $C$} \label{figure2}
\end{figure}

%
%
%
%
%
%
%

\section{A Discussion}\label{sec:Discussion}

\noindent In this section, we discuss the possibility of polynomial time complexity of the partition problem through our approach in this note.  

\vspace{10pt}

\noindent Recall that we formulate a dynamic program for {\bf{DLSP}} in Section \ref{sec:DynamicprogramSolution} in the form of (\ref{eq:recursive}).   To improve the time complexity to solve {\bf{DLSP}}, and in particular, {\bf{DLSPp}}, lies in improving the result in Lemma \ref{lem:dynamicprogramobservation}.   We would require $C_i^{\ast\ast}(J_i)$ to be evaluated for $O(1)$ number of $J_i$, $i = 1, \ldots, N$, which cannot be implied by Lemma \ref{lem:dynamicprogramobservation}, for polynomial time complexity to solve {\bf{DLSPp}}, and hence {\bf{PPi}}.

\bibliographystyle{plain}
\bibliography{Reference_Sim_3}

\end{document}